\DeclareMathOperator\supp{supp}
\newcolumntype{C}[1]{>{\centering\arraybackslash}p{#1}}
\newcolumntype{L}{>{\raggedright\arraybackslash}X}
\newcommand{\comment}[1]{}
\begin{document}
\title{Cyber Deception against Zero-day Attacks: A Game Theoretic Approach\thanks{Research was sponsored by the Army Research Laboratory and was accomplished under Cooperative Agreement Numbers W911NF-19-2-0150 and W911NF-13-2-0045 (ARL Cyber Security CRA). The views and conclusions contained in this document are those of the authors and should not be interpreted as representing the official policies, either expressed or implied, of the Army Research Laboratory or the U.S. Government. The U.S. Government is authorized to reproduce and distribute reprints for Government purposes notwithstanding any copyright notation herein. Branislav Bosansky was also supported by the Czech Science Foundation (no. 19-24384Y).}}
%
%
\author{Md Abu Sayed\inst{1}\orcidID{0000-0002-5560-9150} \and Ahmed H. Anwar \inst{2}\orcidID{0000-0001-8907-3043}   \and Christopher Kiekintveld\inst{1}\orcidID{0000-0003-0615-9584} \and Branislav Bosansky\inst{3}\orcidID{0000-0002-3841-9515} \and Charles Kamhoua\inst{2}\orcidID{0000-0003-2169-5975}}
\authorrunning{M. A Sayed et al.}
%
\institute{University of Texas at El Paso, TX 79968, USA \\
\email{ msayed@miners.utep.edu, cdkiekintveld@utep.edu} \and
US Army Research Laboratory, MD 20783, USA 
\email{a.h.anwar@knights.ucf.edu,charles.a.kamhoua.civ@mail.mil} \and
Department of Computer Science, Faculty of Electrical Engineering,\\ Czech Technical University in Prague\\
\email{branislav.bosansky@fel.cvut.cz
}}
\maketitle  
\begin{abstract}
Reconnaissance activities precedent other attack steps in the cyber kill chain. Zero-day attacks exploit unknown vulnerabilities and give attackers the upper hand against conventional defenses.  Honeypots have been used to deceive attackers by misrepresenting the true state of the network. Existing work on cyber deception does not model zero-day attacks. In this paper, we address the question of "How to allocate honeypots over the network?" to protect its most valuable assets. To this end, we develop a two-player zero-sum game theoretic approach to study the potential reconnaissance tracks and attack paths that attackers may use. However, zero-day attacks allow attackers to avoid placed honeypots by creating new attack paths. Therefore, we introduce a sensitivity analysis to investigate the impact of different zero-day vulnerabilities on the performance of the proposed deception technique. Next, we propose several mitigating strategies to defend the network against zero-day attacks based on this analysis. Finally, our numerical results validate our findings and illustrate the effectiveness of the proposed defense approach.

\keywords{Cyber deception  \and Game theory \and Zero-day attacks.}
\end{abstract}
\section{Introduction}
The cyber kill chain defines seven stages of cyber attack that end with gaining control of a system/network and infiltrating its data \cite{yadav2015technical}. The first stage is the reconnaissance stage in which an adversary collects valuable information regarding the network topologies, structures, node features, and the important assets of the system. To achieve this goal, an attacker may use active sensing techniques and/or passive sensing techniques. The latter can observe traffic between servers and clients to infer information from packet length, packet timing, web flow size, and response delay~\cite{schuster2017beauty}, it is difficult to detect and is invisible to the hosts running the services and can be difficult to be detected by conventional IDS. Active probing attacks send packets to a host and analyze its response. Hence, the attacker learns the system information and vulnerabilities \cite{fu2004empirical}. The active reconnaissance is faster and identifies open and unprotected ports~\cite{bansal2012detection}. On the other side, it is more aggressive and intrusive, and hence can be detected. Also, attackers may mix between active and passive attacks during the reconnaissance stage. Game theory provides a suitable framework for modeling attacks and defense against several attacks \cite{cceker2016deception,zhu2018multi,anwar2019game}.

\textbf{Cyber deception:}
Cyber deception is an emerging proactive cyber defense technology it provides credible yet misleading information to deceive attackers. Deception techniques have been used in the physical space as a classical war technique. However, deception has recently been adopted into cyberspace for intrusion detection and as a defense mechanism \cite{wang2018cyber}. Cyber deception shares some characteristics of non-cyber deception and follows similar philosophical and psychological traits. Successful deception relies on understanding the attacker's intent, tools and techniques, and mental biases. The first step to achieve this deep level of understanding is to act proactively aiming to capture the attacker and exploit the opportunity to monitor her behavior. For that purpose, honeypots are effectively used as fake units in the system/network that deceive the attacker and allow the defender to study her attack strategy, and intent in order to design a better deception scheme. 

\textbf{Honeypots:}
Among many other techniques, honeypots are widely used for cyber deception. Honeypots are fake nodes that can mislead attackers and waste their resources. They are categorized into two levels, namely low-interaction honeypots and high-interaction honeypots. Low interaction honeypots can memic specific services and are virtualized, and hence they are easier to build and operate than high interaction honeypots. However, they can be detected by adversaries more easily \cite{mokube2007honeypots}.  

\textbf{Attack Graph:}
An attack graph (AG) is a graph-based technique for attack modeling. Attack modeling techniques model and visualize a cyber-attack on a computer network as a sequence of actions or a combination of events \cite{lallie2020review}. Attack graphs (trees) are a popular method of representing cyber-attacks. There exist no unique way to instantiate attack graphs, authors in \cite{lallie2020review} surveyed more than 75 attack graph visual syntax and around 20 attack tree configurations. A key challenge of generating attack graphs is the scalability \cite{ou2006scalable}. None of the existing works has shown that the graph generation tool can scale to the size of an enterprise network. In this work, we consider a simplified attack graph where each node represents a vulnerable host in the network (i.e., it suffers one or more vulnerabilities), and edges on the attack graph represent specific exploits that provide reachability to the attacker from one host to another. In this sense, the graph scale is in the order of the size of the original network. Although this model does not explicitly model each vulnerability in the network, however, it sufficiently illustrates the attack paths that can be exploited by adversaries which are an essential input to generating optimal honeypot allocation policy. However, attack graphs can not directly model zero-day attacks since it remain unknown to the graph generating tool. Therefore, we propose parallel  


\textbf{Zero-day attack:}
The challenge with defending against zero-day attacks is that these attacks exploit a vulnerability that has not been disclosed. There is almost no defense against exploiting such unknown vulnerability \cite{bilge2012before}. In this work, we leverage attack graphs to model potential zero-day attacks. If the considered network suffers a zero-day vulnerability then the corresponding attack graph will have some edges and hence attack paths that are unknown to the defender. Moreover, zero-day attacks are used for carrying out targeted attacks. To the best of our knowledge, this represents a new framework to proactively defend against zero-day attacks via strategic honeypot allocation based on game theory and attack graphs.

\textbf{Contributions:}
In this paper, we propose a cyber deception technique using strategic honeypot allocation under limited deception budget. We consider a game theoretic approach to characterize the honeypot allocation policy over the network attack graph. We then evaluate the deception allocation policy under zero-day attacks by introducing several vulnerabilities to the attack graph and study the sensitivity of the different potential vulnerabilities on the attacker and defender game reward. In our analysis, the defender has no information regarding the zero-day vulnerability. We identify the most impactful vulnerability location and introduce several mitigating strategies to address the possible zero-day attack. The developed game model accounts for the network topology and different importance to each node. We summarize our main contribution below:
\begin{itemize}
\item We formulate a deception game between defender and attacker to study the effectiveness of cyber deception against lateral move attacks. The game is played on an attack graph to capture relation between node vulnerabilities, node importance, and network connectivity. We characterize a honeypot allocation policy over the attack graph to place honeypots at strategic locations.
\item We evaluate the proposed deception approach against zero-day attack under asymmetric information where the attack graph is not fully known by the defender. We conduct sensitivity analysis to identify critical locations that have major impact on the deception policy in place.
\item We present three mitigating strategies against zero-day attacks to readjust the existing honeypot allocation policy based on the conducted analysis. 
\item Finally, we present numerical results for the developed game model to show the effectiveness of cyber deception as well as the zero-day attack mitigating strategies.
\end{itemize}

The rest of the paper is organized as follows. We discuss related work in section \ref{sec:relwork}. In section \ref{sec:model}, we present the system model, define the game model, and propose our deception approach.In section, \ref{sec:mitigation} we present zero-day attack mitigating strategies.
Our numerical results is presented in section \ref{sec:results}. Finally, we conclude our work and discuss future work in section \ref{sec:conc}.

\section{Related work}\label{sec:relwork}
Our research builds upon existing work on cyber deception and games on attack graphs to model zero-day attacks and characterize game-theoretic mitigation strategies.
\subsection{Cyber Deception GT:}
Game theoretic defensive deception~\cite{mu2021survey} has been widely discussed in cybersecurity research. Authors in \cite{schlenker2020game} presented a deception game for a defender who chooses a deception in response to the attacker's observation, while the attacker is unaware or aware of the deception. Authors in \cite{pawlick2015deception,cceker2016deception} proposed a signaling game based model to develop a honeypot defense system against DoS attacks. Hypergame theory~\cite{Fraser84} has been used as an extensive game model to model different subjective views between players under uncertainty. \cite{Vane99} explored hypergames for decision-making in adversarial settings. Authors in \cite{ferguson2019SHTSS},\cite{cho2019modeling} leveraged hypergames to quantify how a defensive deception signal can manipulate an attacker's beliefs.   

\subsection{Games on AG:}
Game Theory (GT) provides a suitable framework to study security problems including cyber deception \cite{nguyen2013analyzing}. Modeling the attacker behavior allows the network admin to better analyze and understand the possible interactions that may take place over cyberspace. Security games are defender-attacker games, the defender allocates a limited set of resources over a set of targets. On other hand, the attacker goal is to attack and gain control over these targets \cite{sinha2018stackelberg}. Resource allocation problems are usually modeled as Stackelberg game where the defender leads the course of play. We consider two-player zero-sum games acting simultaneously, hence the Nash equilibrium of the game coincides with that of the Stackelberg game. Moreover, most of the resource allocation problems considered had no underlying network structure. For the cyber deception problem considered we play a security game on an attack graph which imposes a structure on the players reward function and defines the action space for both players as will be discussed in Section \ref{sec:model}.   
\subsection{Zero-day}

A zero-day attack is a cyber attack exploiting a vulnerability that has not been disclosed publicly \cite{bilge2012before}. Due to the challenges associated with zero-day attacks, authors in \cite{bilge2012before} conducted systematic study to learn the characteristics of zero-day attacks from the data collected from real hosts and identify executable files that are linked to exploits of known vulnerabilities. 

Eder-Neuhauser et al. \cite{eder2017cyber} introduced three novel classes of malware that are suitable for smart grid attacks. Their model provides a basis for the detection of malware communication to predict malware types from existing data.They suggest proper update and segmentation policies for anomaly detection. However, such an approach does not capture the dynamics of zero-day attacks or model its usage in lateral movement attacks.

Al-Rushdan et al. \cite{al2019zero} propose zero-day attack detection and prevention mechanism in Software-Defined Networks, prevent zero-days attack based on traffic monitoring. However, in practical zero-day attack incidents alter traffic information to bypass detection systems and traffic monitoring tools. In this work, we take a first step in modeling zero-day attacks in a strategic approach using game theory leveraging the existing work on cyber deception on AGs. We conduct analysis to identify impact of different vulnerabilities and propose zero-day mitigating strategies for several practical scenarios.

\section{System Model}\label{sec:model}

We consider a network represented as a graph $G_1(\mathcal{N},\mathcal{E})$ denote the network graph with a set of nodes $\mathcal{N}$ and edges $\mathcal{E}$. Nodes are connected to each other via edges modeling the reachability and the network connectivity. The defender categorizes the nodes differently according to the node's vitality and functionality. Specifically, there are two distinguishable subsets of nodes, set of entry nodes $E$ and the set of target nodes $T$, other nodes are intermediate nodes that an attacker needs to compromise along the way from the entry node (attack start node $\in E$) toward a target node $\in T$. 

\begin{figure}[htbp]
    \centering
    \includegraphics[width=10cm, height=5cm]{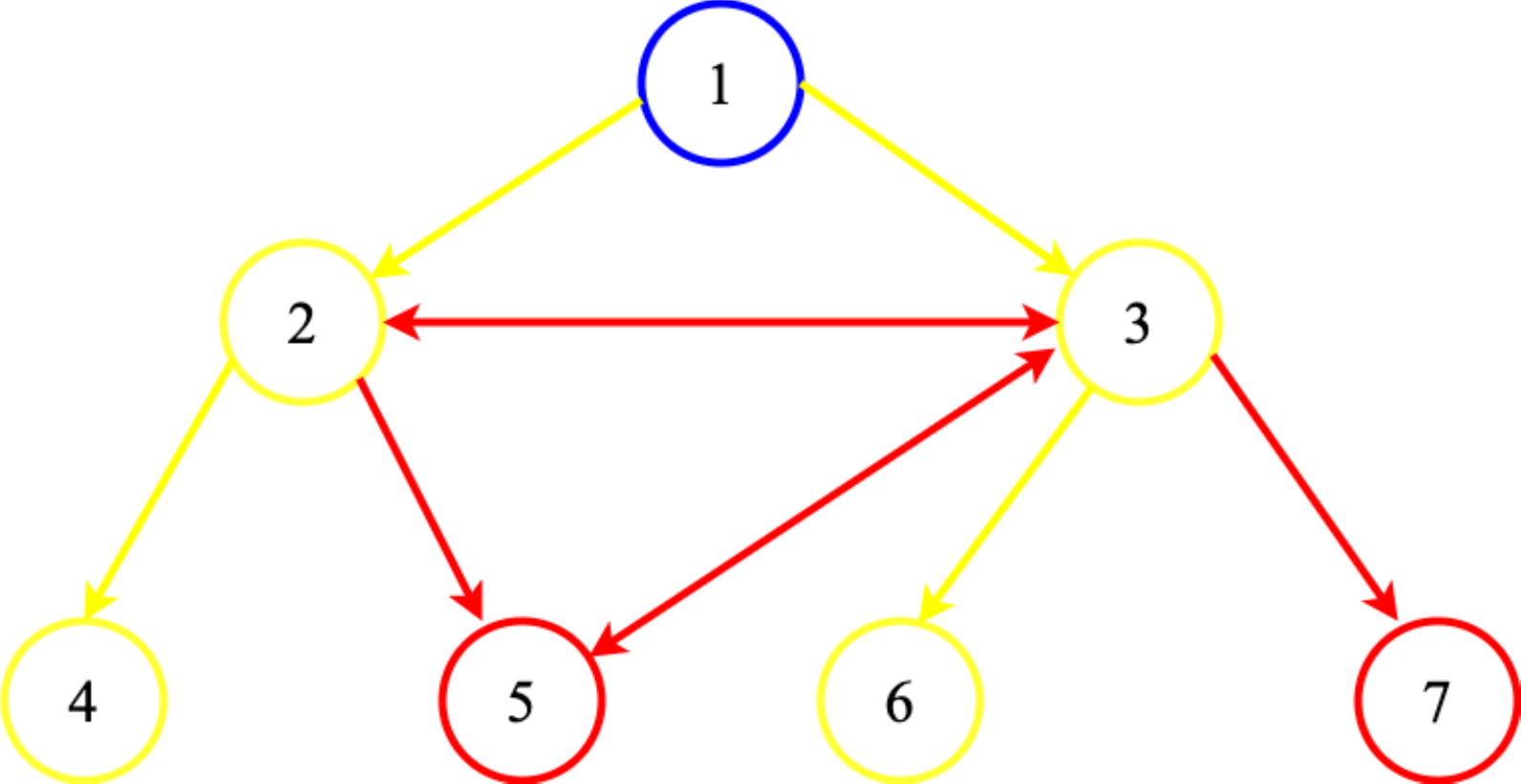}
    \caption{7-node tree network topology with single point of entry and two target nodes (5,7) and zero-day vulnerability(2,3) and (3,5).}
    \label{fig:system}
\end{figure}

An edge connecting node $u$ and node $v$, represents an exploitable vulnerability that allows an adversary to launch an attack to reach node $v$ from a compromised node $u$. In this setting, we adopt a slightly different version of the attack graph introduced in \cite{ammann2002scalable}. In other words, in this graph, each node represents a host that suffers one or more vulnerabilities that could be exploited to reach a neighboring node. Hence, the edge models the connecting link that could be used by malicious users to reach the next victim node. A legitimate user at node $v$, will have the right credentials to reach node $u$. However, an adversary will only reach $u$ through an exploitable vulnerability. For each node $i \in \mathcal{N}$ we assign a value $v(i)$. Hence, $G_1(\mathcal{N},\mathcal{E}_1)$ is an attack graph assumed to be known to the defender and the attacker.

A zero-day attack vulnerability is exclusively known to the attacker. The effect of a single zero-day vulnerability is an additional edge. This generates a different attack graph perceived by the attacker solely. Let $G_2(\mathcal{N},\mathcal{E}_2)$ denote the attack graph induced via zero-day vulnerability $e$, such that, $\mathcal{E}_2 = \mathcal{E}_1+e$.
The attacker plays a game on the graph with an additional edge(s) representing the zero-day vulnerability. In other words, considering a single zero-day vulnerability at a time, $G_2 = G_1 + \{e\}$, where $\{e\}$ is the new edge due to zero-day vulnerability.

Fig. ~\ref{fig:system} denotes 7-node tree attack graph consists of one entry node(1), 4 intermediate nodes(2,3,4,6) and 2 target nodes(5,6). In this network, one available path for reaching every target nodes. However, with two zero-day vulnerabilities(2,3) and (3,5) increases the available attack path to every target node form attacker perspective.

\subsection{Defender model}
The defender allocates one or more honeypots along the network edges as fake vulnerabilities to capture malicious traffic and illegitimate users. Let $H$ denote the honeypot budget. The defender's action is to allocate up to $H$ honeypots over the different edges. Therefore, the defender action space $\mathcal{A}_d = \left\{ \mathbf{e} \in 2^{\mathcal E}  \mid \mathbf{e}^T1 \leq H   \right\}$. Where, $\mathbf{e}$ is a binary vector of length $|\mathcal E|$, such that the $i^{th}$ entry $\mathbf{e}(i) = 1$ indicates a honeypot is allocated along the $i^{th}$ edge, and is set to zero otherwise. The defender incurs a cost associated with each action that takes into account the number of allocated honeypots. Otherwise, the defender will always max out the number of allocated honeypots. Let $C_d$ denote the cost per honeypot. Hence, the total cost is $C_d \times  \left\| a_d\right\|_1$, where $ \left\| a_d\right\|_1$ is the number of honeypots allocated by $a_d$. The defender tries to reduce the attacker reward via placing more honeypots at the edges of high potential that are attractive to attacker, while reducing the total deception cost.  

\subsection{Attacker model}
The attacker is assumed to launch a targeted attack. Therefore, she selects one of the possible attack paths to reach a target node to maximize his/her expected reward. Hence, the attacker's action space, $\mathcal A_a$, is the set of all the possible attack paths starting at an entry node $u \in E$ to a target node $v \in T$. The attacker pays an attack cost that depends on the selected attack path. We consider a cost due to traversing a node in the attack graph denoted by $C_a$. The attacker faces a tradeoff between traversing through important nodes while reducing his overall attack cost. 

\subsection{Reward function}
We define the reward function to capture the the tradeoff that faces each player. For each action profile played by the defender and attacker $(a_d, a_a) \in \mathcal{A}_d \times \mathcal A_a$, the defender receives a reward $R_d(a_d, a_a)$ and the attacker reward is $R_a(a_d, a_a)$. The defender is interested in protecting specific nodes than others. Recall that each node $i \in \mathcal{N}$ is assigned a value $v(i)$ that reflect its importance for the attacker, the defender gains more by protecting high valued nodes. On the other hand, the attacker reward increases when attacking nodes of high values along the selected attack path.
 
The defender reward is expressed as:


\begin{align} \label{eq:1}
      R_d(a_d, a_a) = \sum_{i \in a_a} Cap \cdot v(i)\cdot \mathbf{1}_{\left\{ i \in a_d\right\}} \nonumber  -Esc \cdot v(i)\cdot \mathbf{1}_{\left\{ i \notin a_d\right\}}\nonumber & \\ - C_d \cdot \left\| a_d\right\|_1 + C_a (a_a)
\end{align}

where $Cap$ denotes the capture reward received by the defender when the attacker hits a honeypot along the selected attack path $a_a$. 
$Esc$ is the attacker gain upon successful attack from one node to another in the way toward the target node.  
Finally, $C_d$ and $C_a(a_a)$ are the cost per honeypot, and attack cost, respectively. The attack cost is proportional to the length of the attack path as the attacker could become less stealthy due to numerous moves. We consider a zero-sum game where $R_a + R_d =0$.
Now we readily define a two-player zerosum game $\Gamma (\mathcal{P}, \mathcal{A}, \mathcal{R})$, where $\mathcal{P}$ is the set of the two players (i.e., defender and attacker). The game action space $\mathcal{A} = \mathcal{A}_d \times \mathcal{A}_a$ as defined above, and the reward function $\mathcal{R}=(R_d, R_a)$.

Due to the combinatorial nature of the action spaces in terms of the network size, characterizing a Nash equilibrium (NE) in pure strategy is challenging. However, the finite game developed above, holds a NE in mixed strategies.
Let $\mathbf{x}_{1} $ and $\mathbf{y}_{1} $ denote the mixed strategies of defender, and attacker, when the game is played on known-to-all graph, $G_1$.
The defender expected reward of game 1 (i.e., game on $G_1$ with no zero-day vulnerabilities) is expressed as:
\begin{equation}
    U_d(G_1)=\mathbf{x}_1^TR_d(G_1)\mathbf{y}_1
\end{equation}
where $R_d(G_1)$ is the matrix of the game played on $G_1$ and the attacker expected reward $U_a(G_1) = - U_d(G_1)$. Both defender and attacker can obtain their NE mixed strategies $\mathbf{x}_1^*$ and $\mathbf{y}_1^*$ via a linear program (LP) as follows,
\begin{equation}
\begin{aligned}
&\underset{\mathbf{x}}{\text{maximize}}
& U_d \\
& \text{subject to}
& \sum_{a_d \in \mathcal{A}_d} R_d(a_d,a_a){x}_{a_d} \geq U_d, & \;\; ~~~~ \forall a_a \in \mathcal{A}_a.\\
& & \sum_{a_d \in \mathcal{A}_d} {x}_{a_d} =1,~~ & {x}_{a_d} \geq 0,
\end{aligned}
\label{eq:global_LP}
\end{equation}

\noindent
where ${x}_{a_d}$ is the probability of taking action $a_d \in \mathcal{A}_d$.  

Similarly, the attacker's mixed strategy can be obtained through a minimizer LP under $\mathbf{y}$ of $U_d$.

\section{Zero-day Vulnerability Analysis}

We conduct a zero-day vulnerability analysis by modifying the original graph $G_1$ via  considering one vulnerability at a time. The goal of this analysis is to identify the most critical zero-day vulnerability in terms of the impact of each vulnerability to the attacker's reward against a base deception strategy. The base deception strategy is $\mathbf{x}_1$ that is obtained from the game played on $G_1$. In other words, the attacker expected reward is 
$U_a(G_1) = \mathbf{x}_1^TR_a(G_1)\mathbf{y}_1$, for any game 1 mixed strategies $\mathbf{x}_1 $ and $\mathbf{y}_1$, and $U_a(G_2) = \mathbf{x}_2^TR_a(G_2)\mathbf{y}_2$ for the game played on $G_2$. The game played under $G_1$ is referred to as game 1, and the game played on $G_2$ is referred to as game 2. Where, $\mathbf{x}_{2}$ and $\mathbf{y}_{2} $ denote the mixed strategies of the game played on the $G_2$ graph (i.e., under zero-day vulnerability). 

However, $x_2$ is infeasible in practice for the defender since the defender has no information about the zero-day vulnerability nor its location. However, the attacker's action space expands to contain additional attack paths induced by zero-day vulnerabilities. Each of these vulnerabilities may produce one or more new attack path leading to the target node. 



Although the defender does not actually know that the network suffers a zero-day vulnerability at a specific location, he may have the knowledge that such vulnerability exists. Therefore, the attacker is not fully certain that this specific vulnerability is unknown to the defender. The attacker uncertainty regarding the defender knowledge leads to two possible game settings and hence two evaluation criteria as follows:

\begin{itemize}
    
    \item The first criterion considers an attacker that uncertain whether his opponent knows about the zero-day vulnerability. In fact, the defender has no such information, yet the attacker accounts for some infeasible defender actions. We refer to this criterion as \textit{'optimistic'}. Hence, the expected game value for the attacker,  $U^{opt}_a(G_2) = \mathbf{\hat x}_1^{T}R_a(G_2)\mathbf{y}_{2}$, where $\mathbf{\hat x}_{1}$ is a mixed strategy adopted from $\mathbf{ x}^{1}$ and padded with zeros to ensure proper matrix multiplication while it zero-enforce infeasible actions for the defender.   
    \item Secondly, we consider a \textit{'pessimistic'} criterion, where the attacker is certain that the defender does not know the zero-day vulnerability. Hence, the defender action space is exactly similar to game 1. The expected reward is  $U^{pes}_a(G_2)  = \mathbf{x}_1^{T}R_a(G_2)\mathbf{y}_{2}$. This pessimistic criterion is referred to as game 3. 
\end{itemize}

\begin{remark}
Considering one zero-day vulnerability at a time allows the defender to study the impact of each vulnerability separately, reduces the game complexity, and enables parallel analysis by decoupling the dependencies between different vulnerabilities.
\end{remark}

As explained above, we augment zeros to $\mathbf{ x}_{1}$ to restrict the defender honeypot allocation strategy and make the defender strategy consistent with the  $R_a(G_2)$ matrix. \\
Let, $\mathbf{ x}^{1} = \left [x_0, x_1, x_2, \cdots,   x_r \right ]$, and 
$\mathbf{\hat x}_{1} = \left [x_0, x_1, x_2, \cdots, x_r,\cdots, x_n\right]$. Hence, $\mathbf{\hat x}_{1} = \left [\mathbf{ x}^{1}, \cdots , x_n\right]$, where $n \geq r$, and value of all strategies from $x_{r+1}$ to $x_n$ will be zero after sorting the corresponding actions in $\mathbf{\hat x}_{1}$. For the pessimistic case (i.e., game 3), the defender is forced to play the base deception strategy $\mathbf{x}_1$ in which he also deviates from the NE of game 3. 


We solve one game corresponding to each zero-day vulnerability. Assume we have a set of possible zero-day vulnerabilities $\mathcal{E}_0$ such that $G_2(e)  = G_1 + e ~~~;~~ \forall e \in \mathcal{E}_0 $. For each game we record the expected attack reward, hence we sort the vulnerabilities to find the most impactful that results in the highest increase of the attacker's reward.

Without loss of generality, assuming the new vulnerability introduced one new pure strategy $a_{e}$ for the attacker (if $e \in \mathcal{E}_0$ induces more than one new attack path, we select $a_{e}$ to be the path that has higher reward), then we can establish the following theorem.

\begin{theorem}\label{theorem1}
 For the game $\Gamma$ defined in Section \ref{sec:model}, given any base policy of the defender $\mathbf{x}$, for the new attacker pure strategy $a_{e}$:
 \begin{item}
    \item $\mathbf{y}_2\left[a_{e}\right] = 1~~$;  if $U_a(\mathbf{x},s_e) > U_a(\mathbf{x},\neg a_e)  $
    \item $\mathbf{y}_2\left[a_{e}\right] = 0~~$;  if $U_a(\mathbf{x},a_e) < U_a(\mathbf{x}, a_a) ~~ \forall a_a \in \supp\{ \mathbf{y}_1 \}$ and $U_a(\mathbf{x},a_e) < U_a(\mathbf{x}, \mathbf{y}_1 )$.
 \end{item}
\end{theorem}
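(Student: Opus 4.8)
The plan is to exploit the fact that the base policy $\mathbf{x}$ is held fixed throughout the evaluation (as in the optimistic and pessimistic criteria), so the attacker is not solving a two-sided game but merely best-responding to a known mixed strategy. Against a fixed $\mathbf{x}$ the attacker's reward is linear in its own strategy: writing $U_a(\mathbf{x}, a_a) = \mathbf{x}^{T} R_a(G_2)\,\mathbf{1}_{a_a}$ for the payoff of pure strategy $a_a$ against $\mathbf{x}$, any attacker mixed strategy $\mathbf{y}$ earns $U_a(\mathbf{x}, \mathbf{y}) = \sum_{a_a} \mathbf{y}[a_a]\, U_a(\mathbf{x}, a_a)$. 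Hence $\mathbf{y}_2$ is the solution of a linear program over the probability simplex, and I would invoke the standard fact that a linear functional maximized over a simplex is optimized only on the set of payoff-maximizing pure strategies, with the optimum being the corresponding vertex whenever that maximizer is unique.

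For the first bullet I would read $U_a(\mathbf{x}, \neg a_e)$ as the best payoff attainable without $a_e$, i.e. $\max_{a_a \neq a_e} U_a(\mathbf{x}, a_a)$ (and treat the $s_e$ in the statement as $a_e$). The hypothesis $U_a(\mathbf{x}, a_e) > U_a(\mathbf{x}, \neg a_e)$ then says $a_e$ strictly beats every alternative pure strategy against $\mathbf{x}$, making it the \emph{unique} maximizer of the linear objective; by the simplex fact every optimal response places all mass on it, giving $\mathbf{y}_2[a_e] = 1$.

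For the second bullet I would first observe that passing from game $1$ to game $2$ only appends the column $a_e$ on the attacker side, so every old payoff $U_a(\mathbf{x}, a_a)$ is unchanged and $\mathbf{y}_1$ remains feasible in game $2$, still attaining $U_a(\mathbf{x}, \mathbf{y}_1)$. The assumption $U_a(\mathbf{x}, a_e) < U_a(\mathbf{x}, a_a)$ for all $a_a \in \supp\{\mathbf{y}_1\}$ forces $a_e$ below every support payoff, hence below their convex combination $U_a(\mathbf{x}, \mathbf{y}_1)$ (so the second listed inequality is in fact implied by the first). Since $a_e$ thus yields strictly less than a value already achievable without it, it cannot lie in the argmax set of the attacker's linear program, and every optimal $\mathbf{y}_2$ assigns it zero weight, i.e. $\mathbf{y}_2[a_e] = 0$.

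The main obstacle will be conceptual rather than computational: one must justify that $\mathbf{y}_2$ is genuinely a one-sided best response to the fixed $\mathbf{x}$ and not an equilibrium component of a re-solved game $2$. This is precisely what the optimistic/pessimistic setup guarantees, since the defender cannot react to the unknown vulnerability. Once that reduction is in place, the only remaining care is with ties, and here the strictness of both hypotheses is what does the work: it makes the maximizer unique in the first case and the suboptimality strict in the second, so the weight on $a_e$ is pinned at exactly $1$ or $0$ for \emph{every} optimal $\mathbf{y}_2$.
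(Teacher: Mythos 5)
Your proof is correct and takes essentially the same approach as the paper: the paper's entire proof is a one-line appeal to the definition of strong (strict) dominance against the fixed base policy $\mathbf{x}$, and your best-response/linear-programming argument over the simplex is exactly that dominance argument spelled out (unique strict maximizer receives all mass; a strategy strictly below an achievable payoff receives none). Your side observations — reading $s_e$ as a typo for $a_e$, and noting that the second inequality in the second bullet is implied by the first via convex combination — are consistent with the paper's intended reading.
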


\begin{proof}
The proof follows strong dominance definition \cite{basar1999dynamic}.
\end{proof}
In Theorem \ref{theorem1}, we characterize two main conditions: (1) when the zero-day vulnerability generates a new attack path that strongly dominates every other existing attack path and (2) when it is being dominated by every path in terms of both pure and mixed strategy. 

\section{Zero-day mitigating strategies}\label{sec:mitigation}

 The defender takes additional actions to mitigate the possible zero-day vulnerability exploits. The performed game-theoretic analysis identifies the impact of each vulnerability, and the attacker's strategy for exploiting such vulnerability. 
The defender does not know which of the vulnerabilities will take place. However, to mitigate the zero-day attack, the defender allocates an additional honeypot. We consider four different strategies such as impact-based, capture-based, worst case mitigation, and critical point analysis to select the location of the new mitigating honeypot.

\subsection{Impact-based mitigation (Alpha-mitigation)}

First, we allocate based on the impact of each zero-day vulnerability. The impact measures the increase of the attacker reward due to each introduced vulnerability, $e \in \mathcal{E}_0$, where $\mathcal{E}_0$ is the set of zero-day vulnerabilities.
We allocate the new honeypot to combat the most impactful vulnerability such that, $U_a(G_2(e))$ is the highest. The defender may allocate more honeypots following the same order of impact of each $e \in  \mathcal{E}_0$. In this mitigating strategy, we assume no information is available to the defender about which vulnerability is introduced. In the next subsection, we consider the probability of each of these vulnerabilities. 
In Section \ref{sec:results}, we shed more light on the formation of the set of zero-day vulnerabilities $\mathcal{E}_0$ overcoming the possible explosion in its carnality and applying several rules to exclude dominated elements that are obviously useless or infeasible to the attacker.

\subsection{Capture-based mitigation (LP-mitigation)}

In the previous strategy (i.e., Alpha-mitigation), the defender does not account for the probability that a zero-day vulnerability may occur.
Let $P(e)$ denote the probability that a vulnerability located at edge $e \in \mathcal{E}_0 $ exists. The impact of such vulnerability is denoted by $i(e)$, where the impact is the innovation in reward received by the attacker due to exploiting $e$ on $G_1 + \{e\}$ compared to the attacker's reward on $G_1$.
Let $J(x)$ denote the cost function for the defender as follows:

\begin{equation}
    J(x) = \sum_{e \in \mathcal{E}_0} P(e) \cdot i(e) \cdot (1 - y(e)\cdot x(e) )
\end{equation}

where $y(e)$ is the probability that $e \in \mathcal{E}_0$ is exploited during an attack, and $x(e)$ is the unknown probability to assign honeypot at $e$.

The goal of the defender is to find $x \in \left[0,1\right]^{|\mathcal{E}_0|}$ that minimizes the cost function $J(x)$. This results in a linear program that can be solved efficiently. The outcome of this LP will pick the location $e$, of the highest impact and most likely to occur (i.e., $argmax_{e} P(e)\cdot i(e)$). However, since the defender may not know the probability of existence priorly, we consider a worst-case scenario. In other words, we assume that nature will play against the defender and try to minimize its reward. Hence, the defender mitigating strategy should be characterized in response to the selection of the nature that can be obtained by solving a max-min problem as discussed next.

\textbf{Worst-case mitigation (play against Nature):}
After identifying the most impactful vulnerability location or set of vulnerabilities by doing graph analysis defender does 
not sure about which zero-day vulnerabilities the attacker is going to exploit. Therefore, we do game formulation to find defender mitigating 
strategies based on the available information of the high impactful locations.

The attacker chooses one vulnerability at a time to exploit and selects a possible attack path associated with that vulnerability. Hence, the attacker's action space, $\mathcal A_n$, is the set of all possible zero-day vulnerabilities, and defender action space, $\mathcal A_{md}$, is the set of all possible high impacted locations. This problem is formulated as an auxiliary game played between the defender and nature.

For each action profile played by both players $(a_{md}, a_n) \in \mathcal{A}_{md} \times \mathcal A_n$, the defender receives a reward $R_{md}(a_{md}, a_n)$ after zero-day attack mitigation and the attacker reward is $R_n(a_{md}, a_n)$. When the attacker selects vulnerability and the defender selected high impact location, does not match the defender's mitigating reward simply comes from the defender expected reward based on which criteria we are following. When they match we just follow Eqn.(\ref{eq:1}) based on which attack path and honeypot allocation are selected by the attacker and defender respectively.

The defender reward is expressed as 

\[ R_{d}^{mitigate}(G_1) = \begin{cases} 
      U_{d}(G_2) \: or \: U_{d}(G_3) & i \neq j \\
      R_{d}^{mitigate}(a_{d}^{mitigate}, a_n) & i = j 
   \end{cases}
\]

We consider a zero-sum game. Let $\mathbf{x}_{mitigating} $ and $\mathbf{y}_{nature} $ denote the mitigating strategies of defender, and nature mixed strategies, when the game is played on known-to-all graph, $G_1$. The defender expected reward in worst case mitigation play against nature is expressed as:
\begin{equation*}
    U_{d}^{mitigate}(G_1)=\mathbf{x}_{mitigating}^TR_{d}^{mitigate}(G_1)\mathbf{y}_{nature}
\end{equation*}
where $R_{d}^{mitigate}(G_1)$ is the matrix of the game played on $G_1$ and the nature(attacker) expected reward $U_n(G_1) = - U_{d}^{mitigate}(G_1)$.

This game has played over high impact locations in graph and gives defender mitigating strategies which location to focus for mitigation and nature mixed strategies what location attacker may choose to exploit. After having defender mitigating strategies and nature mixed strategies of the attacker, any mitigation approaches can run and eventually evaluate over it.

\subsection{Critical point mitigation}

Previously, we specified a honeypot to combat zero-day attacks in addition to the honeypots used via the defender's base deception policy. Now aim at modifying the base deception policy itself to combat the zero-day attack given the outcome of our analysis of the impact associated with each zero-day vulnerability. 
An insightful observation is that the defender tends to greedily deploy honeypots in locations that are closer to the target nodes in the network. However, when the attacker chooses a different path to attack and reach the target node, this approach does not help. It is worth noting that, in defending against zero-day attacks, when the defender selects a location that protects high-degree nodes (this is captured in node values $v(i)$) that belong to multiple attack paths while being far from target nodes, the defender successfully captures the attacker more often. Interestingly, high-impact locations align with high-degree locations to be protected more often following Nash equilibrium deception strategies. 

These observations led us to conduct critical point mitigation to find overlapping locations in the graph. In critical point mitigation, we choose one of the high impacted locations which is also an overlapping location where we deploy mitigation. After having critical points we increase the cost of accessing these points, consequently re-run the game on increased cost locations to find updated defender strategies and align optimistic and pessimistic defender strategies based on the updated defender strategies.

In our critical point mitigation, we modify the base policy of the defender,  $\mathbf{x}_1$, as we do not deploy additional honeypots. After identifying the most impactful vulnerability locations, we increase node values of the nodes most affected (i.e., neighbors) by such vulnerabilities. This shifts the defender's attention to these locations and in turn, results in a modified base policy (which we refer to as critical point mitigation) that considers the significance of these nodes. We show that critical point mitigation increased the capture rate of attackers when tested in different settings such as increased number of honeypots, different vulnerable entry nodes, and target nodes.






\section{Numerical Results}\label{sec:results}

In this section, we present our numerical results to validate the proposed game-theoretic model. We evaluate our analysis of zero-day vulnerability and the proposed mitigating strategies. First, we present game results to identify impact of possible zero-day vulnerabilities for the optimistic and pessimistic defender. Second, we show the results of the proposed deception and mitigation strategies. Finally, we discuss our findings, limitations, and future directions of our current research.


\subsection{Experiment:}
The initial honeypot allocation strategy follows the Nash equilibrium of the game model (game 1). We formulate the problem as a zero-sum game, solve the game defined in Section 4 and find the Nash equilibrium in terms of the mixed strategies, $x^{*}$ and $y^{*}$, for the defender and attacker strategies, respectively.

For the analysis of the potential impact of zero-day vulnerabilities, we consider a 20-node network topology with 22 edges shown in Fig. ~\ref{fig:network}.  
The 20-node network topology shown in Fig. ~\ref{fig:network} represents an
attack graph with multiple root node (i.e., the entry node $\mathcal{E}$ = \{0, 1, 2\}). In this scenario we define the set of target nodes as three nodes, T = \{18, 19, 20\}.


To form the set of zero-day vulnerabilities $\mathcal{E}_0$, we analyze our 20 nodes network topology. If we consider all possible new edges in the network which is impractical as time complexity is ${n}^2$ for $n$ nodes network. Since the vulnerability analysis is independent between different elements in $\mathcal{E}_0$, it can run in parallel computing nodes to reduce the run-time. Moreover, some locations are practically infeasible or useless to the attacker. We implemented a set of rules that excluded useless edges and edges that do not benefit our analysis. For instance, we excluded edges leading to dead-end nodes and edges from nodes that are unreachable from any attack path. Also, direct edges between targets and entry nodes are dominant edges without further analysis.

\begin{figure}[htbp]
    \centering
    \includegraphics[width=13cm, height=8cm]{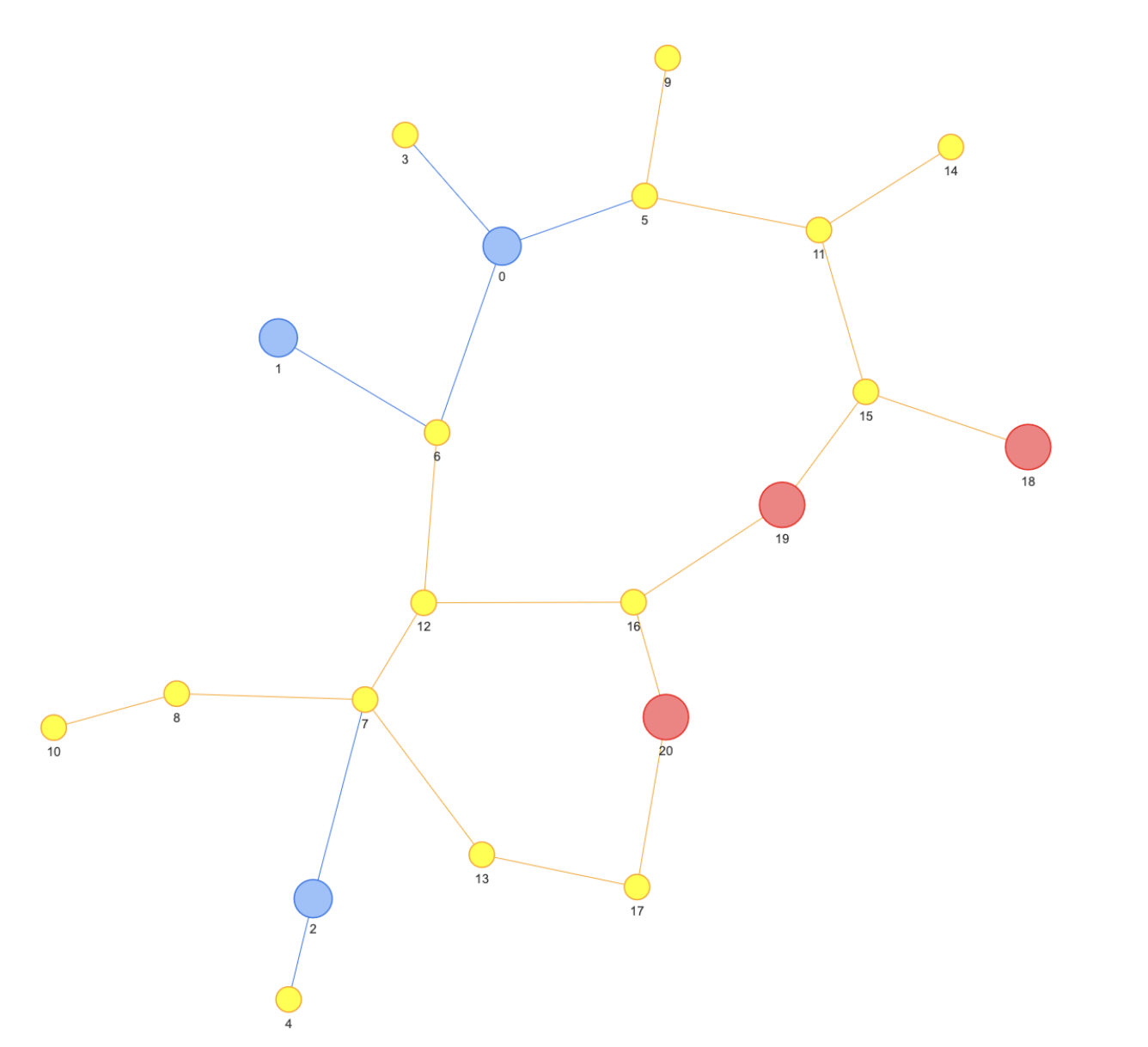}
    \caption{Network topology of 20 nodes with blue red, and yellow color for entry, target and intermediate nodes respectively}
    \label{fig:network}
\end{figure}

We compare the Nash equilibrium strategy for honeypot allocation with existing attack policies to illustrate the effectiveness of our proposed cyber deception approach. For that, we observe defender and attacker gain under different conditions. Fig. ~\ref{fig:capesc} illustrates how defender reward change on several conditions including escape reward of honeypot and capture cost of the attacker.

Fig. ~\ref{fig:esc} shows the defender reward against different cost values for escaping a single honeypot in the network over different attack policies. We also compare the Nash equilibrium reward against the greedy and random attacker. A greedy attacker always selects nodes that have the highest values to attack regardless of their cost. A random attacker does not that informed about network that is unable to distinguish between possible attack paths and hence uniformly selects among available attack paths.

When the attacker deviates from equilibrium strategies  $\mathbf{y}^{*}$, such as choosing greedy or random strategies, the defender reward tends to be higher or the same. For low Esc values, defender reward against greedy attacker higher compare to Nash equilibrium which less motivates an attacker to play rational strategies. On the other hand, high Esc values lure the attacker to take more risk in attacking valuable nodes, and as a consequence a gradual decrease in defender reward.

Fig. ~\ref{fig:cap} shows a linearly increase in defender reward for different attacker policies. For high cap values, defender reward increases if the attacker deviates from rational strategies. 

\begin{figure*}[h!]%
    \centering
    \begin{subfigure}{.50\columnwidth}
        \includegraphics[width=\columnwidth, height=5cm]{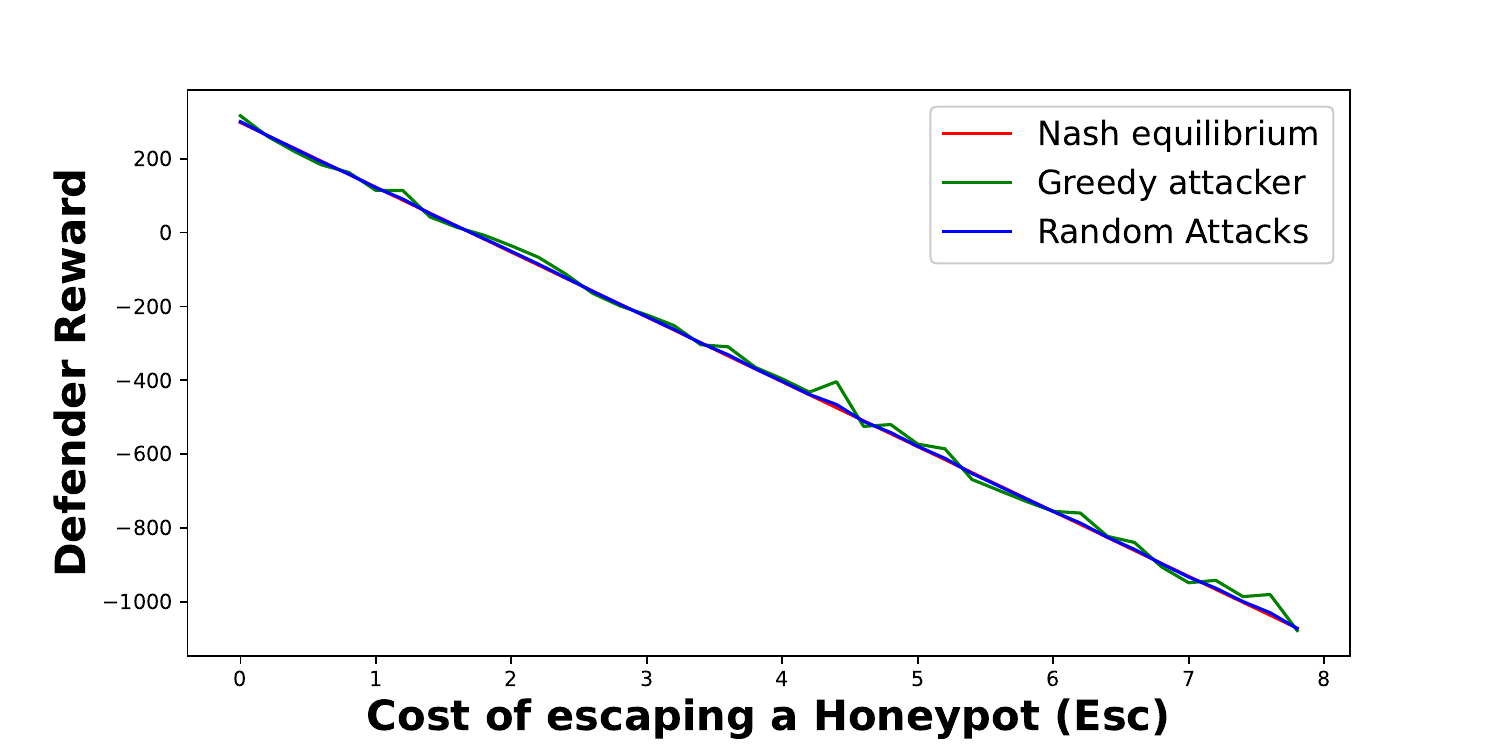}%
        \caption{Defender reward versus attacker \\ cost of escaping a honeypot.}%
        \label{fig:esc}%
    \end{subfigure}\hfill%
    \begin{subfigure}{.50\columnwidth}
        \includegraphics[width=\columnwidth, height=5cm]{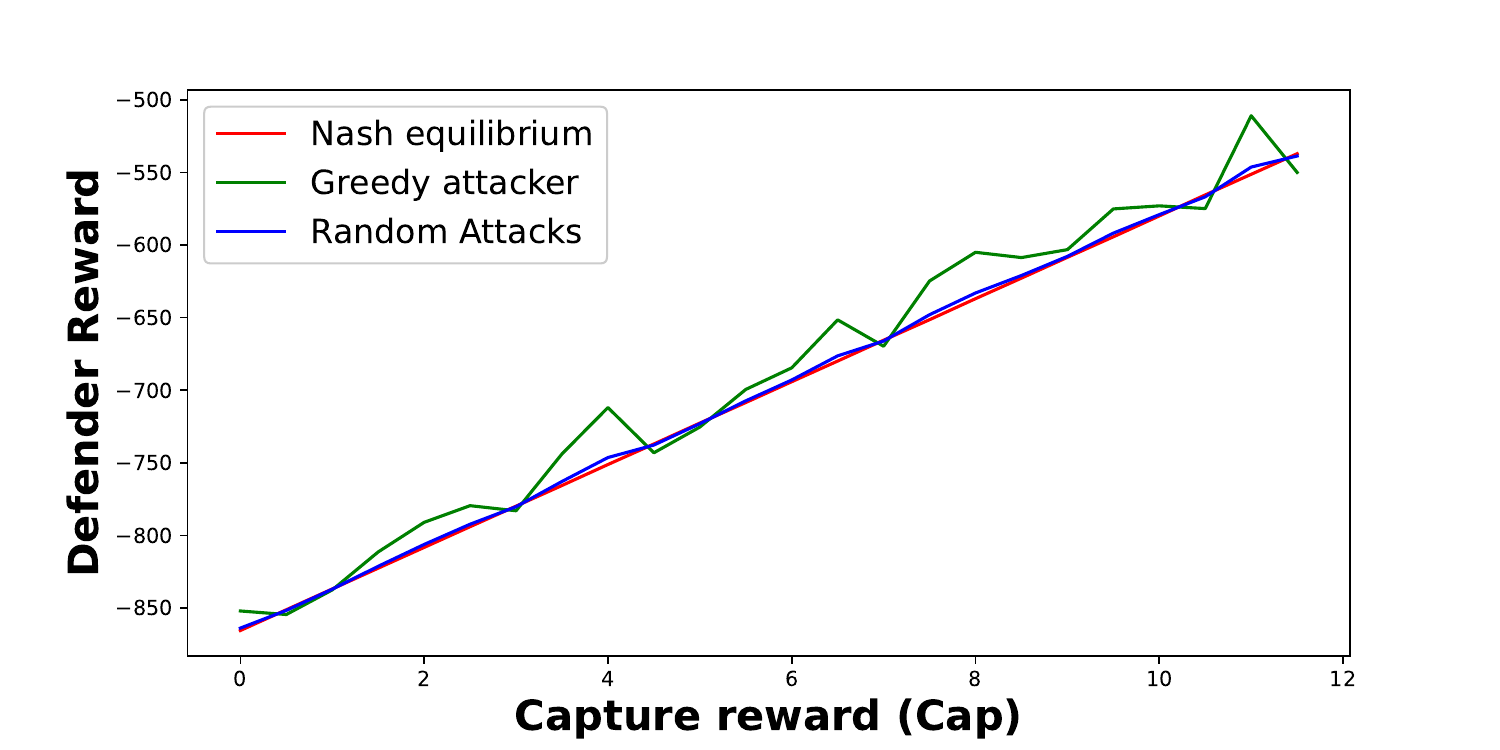}%
        \caption{Defender reward versus  capture cost \\ of attacker.}%
        \label{fig:cap}%
    \end{subfigure}\hfill%
    \caption{Defender reward over different Cap and Esc values. }%
    \label{fig:capesc}%
\end{figure*}

\comment{
\begin{figure} 
    \centering
    \includegraphics[width=13cm, height=8cm]{Esc.pdf}
    \caption{Defender reward versus attacker cost of escaping a honeypot.}
    \label{fig:esc}
\end{figure}

\begin{figure} 
    \centering
    \includegraphics[width=13cm, height=8cm]{Cap.pdf}
    \caption{Defender reward versus  capture cost of attacker.}
    \label{fig:cap}
\end{figure}
}

In addition, we also examine attacker's reward against different defender policies to deceive attacker and protect the network. Fig. ~\ref{fig:nhcen} shows how attacker gain decreases as the number of honeypots increases and its dependence on the entry nodes. 

\begin{figure*}[h!]%
    \centering
    \begin{subfigure}{.50\columnwidth}
        \includegraphics[width=\columnwidth, height=5cm]{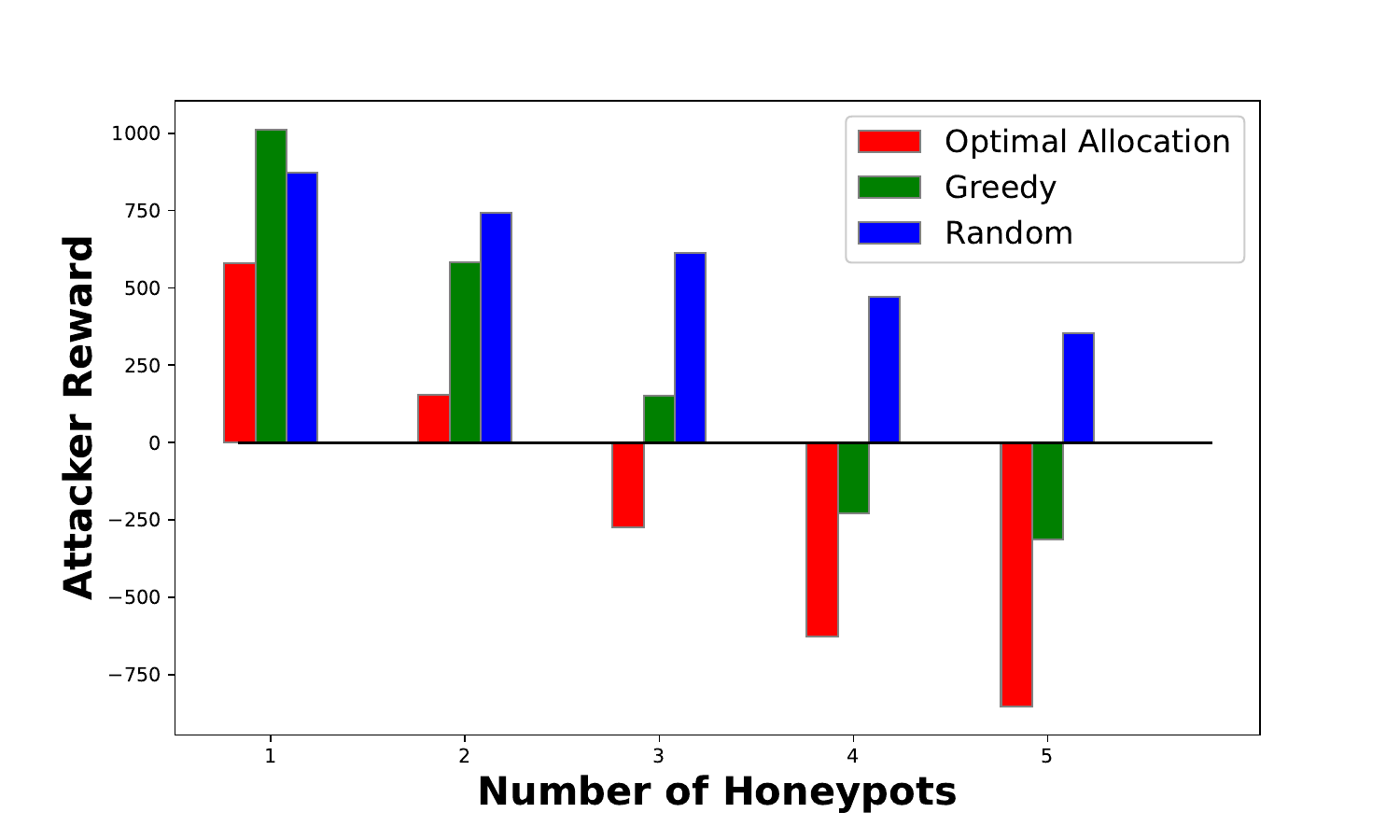}%
        \caption{Attacker reward versus the \\ number of honeypots.}%
        \label{fig:nh}%
    \end{subfigure}\hfill%
    \begin{subfigure}{.50\columnwidth}
        \includegraphics[width=\columnwidth, height=5cm]{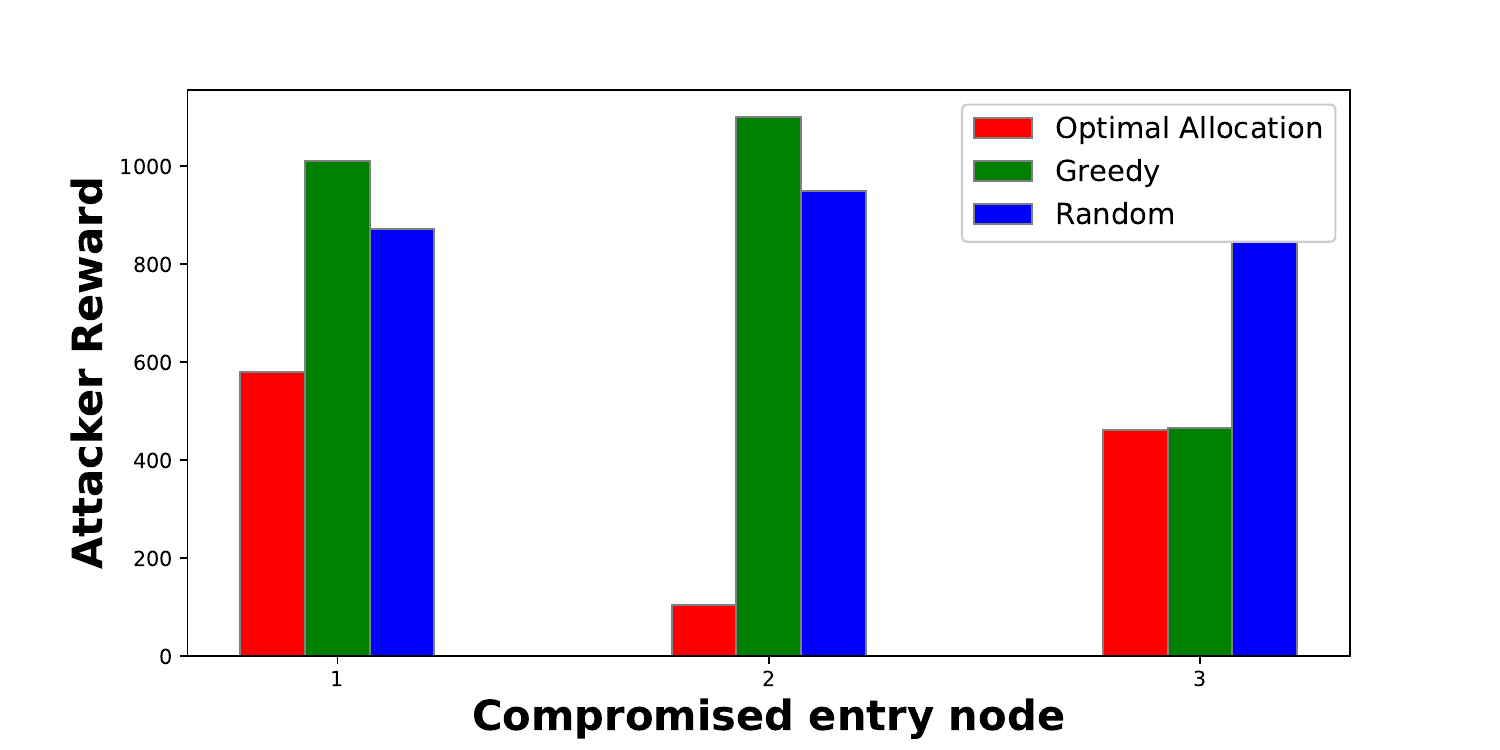}%
        \caption{Attacker reward versus compromised entry node.}%
        \label{fig:cen}%
    \end{subfigure}\hfill%
    \caption{Attacker reward over different condition such as variation in honeypot number and compromised entry node.}%
    \label{fig:nhcen}%
\end{figure*}

\comment{
\begin{figure} 
    \centering
    \includegraphics[width=13cm, height=8cm]{ARvsNH.pdf}
    \caption{Attacker reward versus the number of honeypots.}
    \label{fig:nh}
\end{figure}
}

In Fig. ~\ref{fig:nh} we plot the average attack reward for different defender policies on honeypot budgets. We compare the performance of our optimal allocation with the greedy allocation that always allocates honeypots in the path of highest values nodes and random policies where defender uniformly select one node to protect rather than considering network topology analysis. The analysis of Fig. ~\ref{fig:nh} illustrates that the optimal budget of honeypot in this network is three or more honeypots, as it dramatically reduces the effect of the attack. Also deploying 3 or more honeypots is very costly.

In our 20-node network, three entry nodes are compromised at the start of the attack, so the attacker can attack using all possible existing paths in the network starting from any of the compromised entry nodes.
We also plot the attacker’s reward for a different number of compromised nodes in the network as shown in Fig. ~\ref{fig:cen} over different defender policies. Here greedy and optimal allocation produces the same magnitude result.  

Both Fig. ~\ref{fig:capesc}  and Fig. ~\ref{fig:nhcen} illustrate that deviating from Nash equilibrium and selecting some naive policies would not be optimal. Developing optimal mitigating strategies against a well-informed attacker critical for the defender to outperform naive deception policies such as random or greedy policies.

\comment{
\begin{figure}[h!]
    \centering
    \includegraphics[width=13cm, height=8cm]{ARvsCEN.pdf}
    \caption{Attacker reward versus compromised entry node.}
    \label{fig:cen}
\end{figure}
}

\subsection{Impact of zero-days vulnerability}
In our analysis, we find out high impact locations (zero-day vulnerabilities) for the 20-node network. We measure the impact of zero-day vulnerability. We consider two scenarios, first, when the attacker is certain that the defender does not know zero-day vulnerability. Second, the attacker is not sure whether the defender is aware of these zero-day vulnerabilities. 

We also observe some zero-day vulnerabilities increase attacker reward massively and some remain the same compared to the naive defender. It is worth mentioning that some zero-day vulnerabilities also increase attacker reward in both cases, some increase only one scenario, not both depending on the reward function.

In Table. ~\ref{table:1} we present attacker reward for different high-impact locations against the naive, optimistic, and pessimistic defender. Attacker reward against naive defender is the benchmark, attacker reward against pessimistic and optimistic defender defines how impactful that zero-days is.  

\begin{table}[!htb]
    \captionsetup{justification=centering}
    \caption{Attacker reward against naive defender, optimistic defender, pessimistic defender for top 10 edges}
    \label{table:1}
    \centering
    \small
    \begin{tabular} {|c|c|c|c|} 
        \toprule
\thead {Edge}  
    & {\thead{Naive defender}} & {\thead{Optimistic defender}} & {\thead{Pessimistic defender}} \\
    \midrule
(6, 7) & 153.43 & 401.03 & 398.70 \\
(5, 7) & 153.43 & 374.65 & 370.26 \\
(3, 7) & 153.43 & 344.39 & 345.20 \\
(16, 17) & 153.43 & 326.52 & 326.52 \\
(12, 13) & 153.43 & 325.54 & 325.55 \\
(15, 17) & 153.43 & 323.60 & 323.60 \\
(11, 13) & 153.43 & 322.42 & 322.42 \\
(11, 17) & 153.43 & 315.72 & 315.71 \\
(14, 17) & 153.43 &  313.99 & 313.99\\
(12, 17) & 153.43 &  307.24 & 307.23\\
    \bottomrule
    \end{tabular}
\end{table}

\subsubsection{Attacker reward increases: } 

Based on our study, we highlight several reasons why certain zero-day vulnerabilities cause high damage to the defender compared to others. First, if a zero-day vulnerability creates multiple attack paths to any or all target nodes, that challenges the defender base-deception policy with limited honeypots in place and hence, causes significant damage. Second, zero-day vulnerabilities that are very close to any target nodes on the attack graph empower the attacker through a shortcut and enhance her reward. Also, a combination of the first two features leads to a significant loss for the defender.


\subsubsection{Attacker reward remain same: } Interestingly, not all potential zero-day vulnerabilities cause significant damage to defender in terms of increasing attacker reward. Such zero-day vulnerabilities do not add useful actions to attacker action spaces that benefits the defender, consequently, the defender does not need to take mitigating measures for these types of vulnerabilities. Therefore, these observations benefit the defender to develop proactive defense focusing on most critical vulnerability locations.

\subsection{Mitigation:}

As detailed in Section 5, we proposed several approaches to develop  mitigating strategies against zero-day attacks. In our approach, the defender goal is to thwart the attacker's progress in the network by observing network information. We present numerical results to show the effectiveness of our mitigating approaches such as measuring proportion under various settings.
\begin{figure*}[h!]%
    \centering
    \begin{subfigure}{.50\columnwidth}
        \includegraphics[width=\columnwidth, height=5cm]{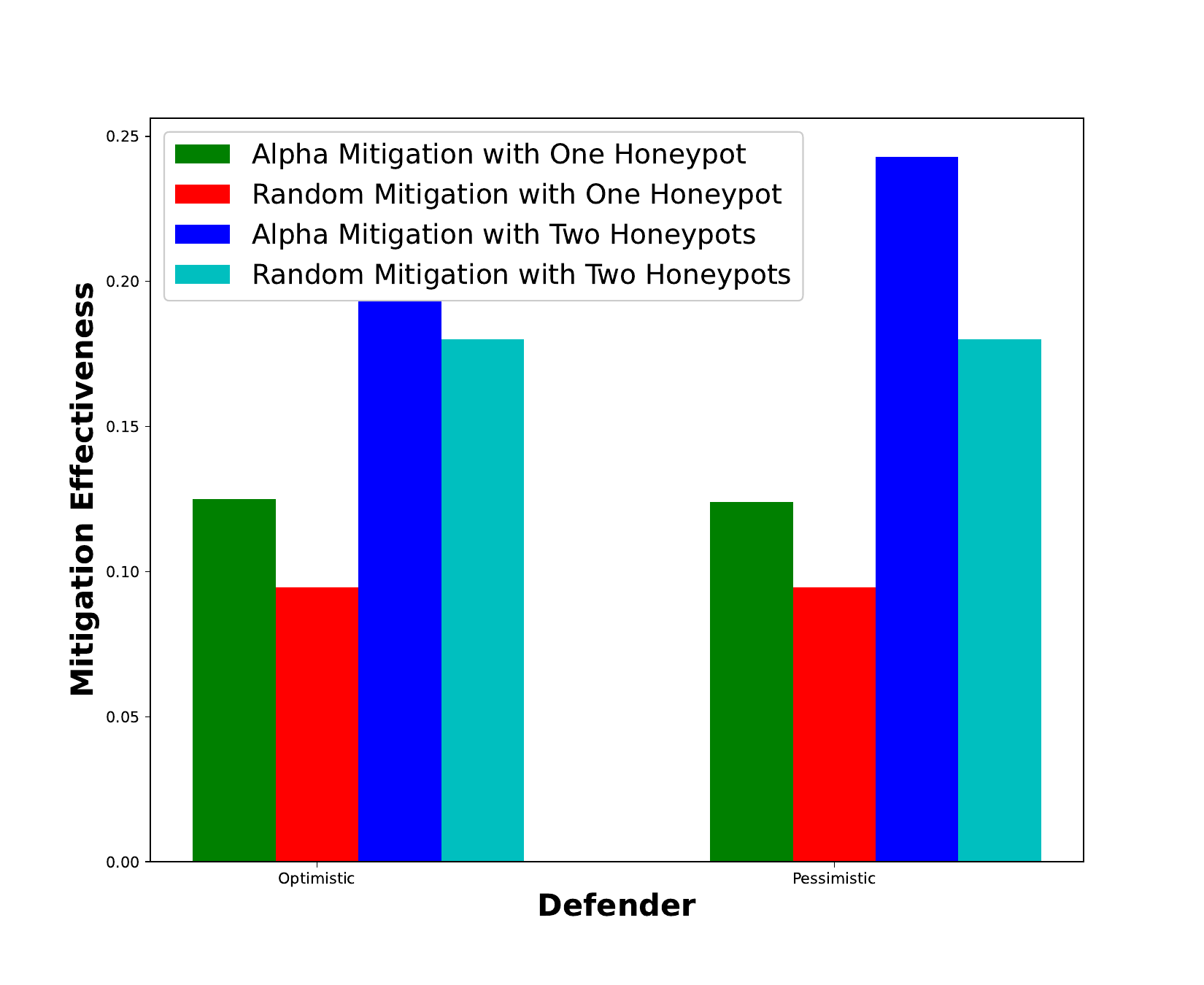}%
        \caption{Capture proportion of defender \\ over zero-days vulnerabilities with \\ single and   multiple honeypot for \\ both Alpha and random strategies.}%
        \label{fig:alpha}%
    \end{subfigure}\hfill%
    \begin{subfigure}{.50\columnwidth}
        \includegraphics[width=\columnwidth, height=5cm]{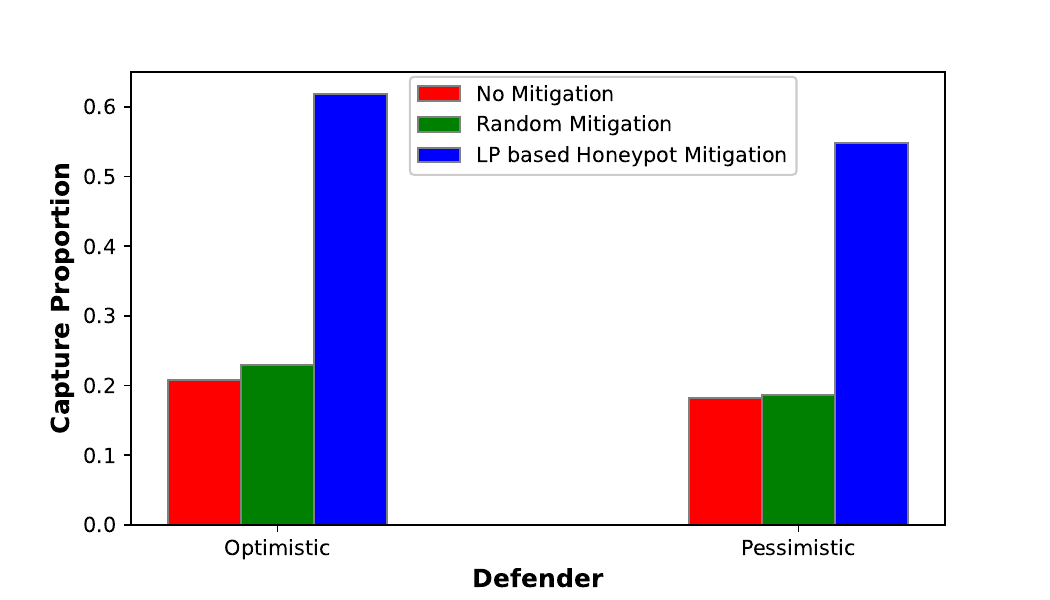}%
        \caption{Capture proportion of defender  \\ with no, random,  and LP-based \\ honeypot mitigating strategies.}%
        \label{fig:lp}%
    \end{subfigure}\hfill%
    \caption{Attacker capture proportion over different mitigating strategies \\ of defender including Alpha or LP.}%
    \label{fig:alphalp}%
\end{figure*}

In Fig. ~\ref{fig:alphalp} we show the proportion of attacker capture both for the optimistic and pessimistic defender with impact and linear programming-based mitigation. Fig. ~\ref{fig:alpha} presents the result of our impact-based mitigation. In our Alpha mitigation, we place honeypot based on the high-impact location whereas random strategies choose a location uniformly to place honeypot. Mitigation effectiveness denotes the percentage of zero-day vulnerabilities defender mitigation (Alpha) prevents among all vulnerabilities. And capture proportion denotes the percentage of time an attacker is captured when exploiting a particular vulnerability.

Fig. ~\ref{fig:alpha} shows optimistic defender Alpha mitigation with one honeypot has higher mitigation effectiveness compared to random mitigation with one honeypot. On the other hand, the same strategies with 2 honeypots show a higher degree of deviation compared to the previous which denotes an increasing number of honeypots is useful but not a feasible solution.  

Fig. ~\ref{fig:lp} denotes attacker capture proportion over no, random, and LP-based honeypot mitigation both for the optimistic and pessimistic defender. No-mitigation and random mitigation are very close to each other meaning that randomly allocating honeypots will not bring any gain. After having the probability of allocating mitigating honeypot at different locations by solving a linear program explained in Section \ref{sec:mitigation}, we place honeypot on the corresponding location(s) and measure the proportion of capture the attacker increases.


\comment{
\begin{figure}[h!]
    \centering
    \includegraphics[width=12cm, height=7cm]{game3_2.png}
    \caption{Attacker reward for different edges over defender strategies with multiple mitigating honeypots in game3}
    \label{fig:honeypot2}
\end{figure}}


\comment{
\begin{figure}[h!]
    \centering
    \includegraphics[width=12cm, height=7cm]{Alpha.pdf}
    \caption{Capture proportion on different defender strategies}
    \label{fig:proportion_alpha}
\end{figure}}

\comment{
\begin{figure}[h!]
    \centering
    \includegraphics[width=12cm, height=7cm]{LP.pdf}
    \caption{Capture proportion on different defender mitigation strategies}
    \label{fig:lp}
\end{figure}}


\begin{figure*}[h!]%
    \centering
    \begin{subfigure}{.50\columnwidth}
        \includegraphics[width=\columnwidth, height=5cm]{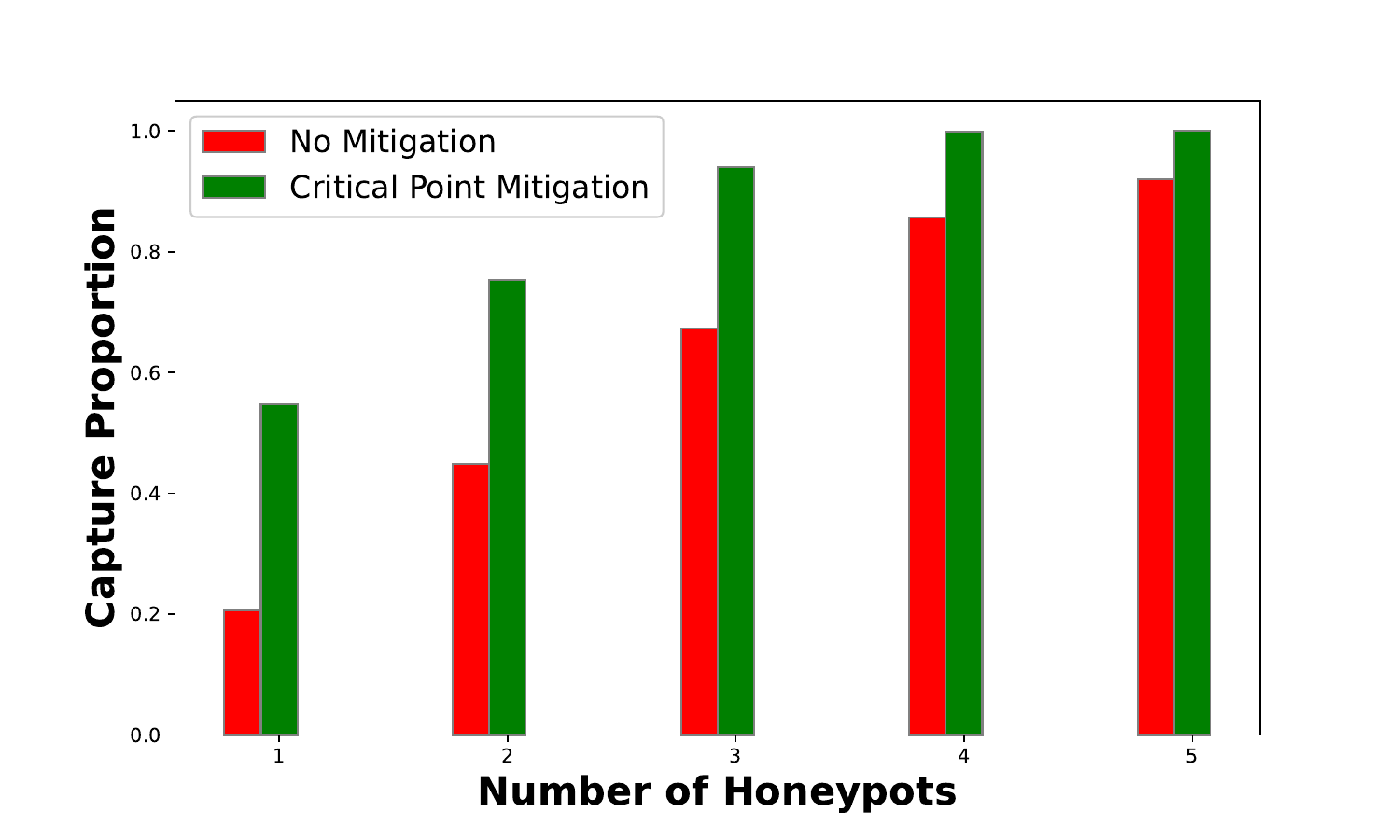}%
        \caption{Proportion of capture of attacker \\ versus the number of honeypots by \\ optimistic defender.}%
        \label{fig:cnh}%
    \end{subfigure}\hfill%
    \begin{subfigure}{.50\columnwidth}
        \includegraphics[width=\columnwidth, height=5cm]{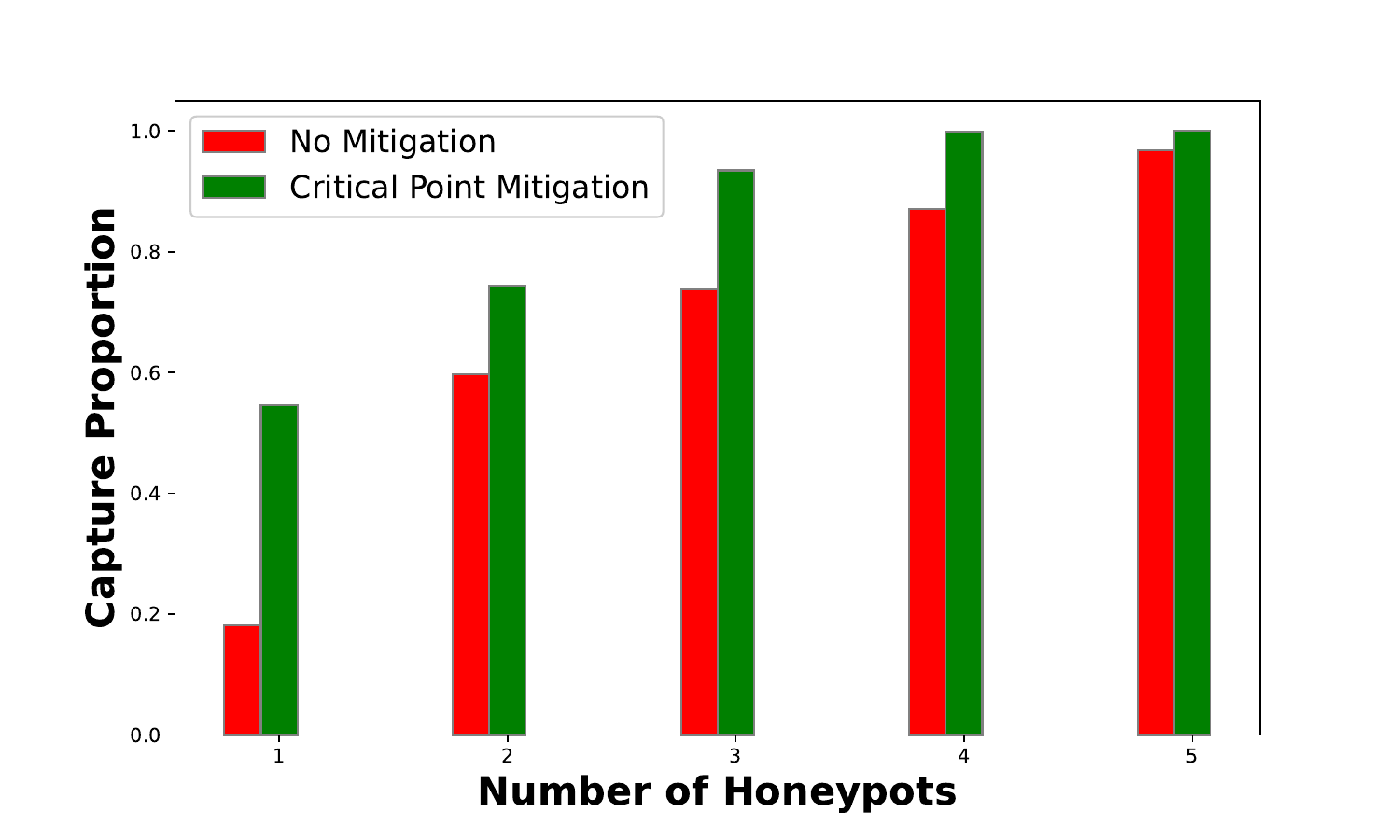}%
        \caption{Proportion of capture of attacker \\ versus the number of honeypots by \\ pessimistic defender.}%
        \label{fig:ccen}%
    \end{subfigure}\hfill%
    \caption{Attacker's capture proportion over different number of honeypots \\ for defender on before and after critical node mitigation.}%
    \label{fig:cnhcen}%
\end{figure*}

In critical point mitigation, we modify the base policy without additional honeypot to take into account the criticality of the most impactful vulnerability. Fig. ~\ref{fig:cnhcen} denotes the capture proportion over the increased number of honeypots for the defender.

Fig. ~\ref{fig:cnh} shows capture rate increase for both no-mitigation and critical point mitigation strategies at different deception budgets (numbers of honeypots in the base policy). The difference between no and critical point mitigation reduces over the increased number of honeypots, which denotes that the number of honeypots more than three is not useful for mitigation. Fig. ~\ref{fig:ccen} shows the same result compare to Fig. ~\ref{fig:cnh}.

\begin{figure}[h!]
    \centering
    \includegraphics[width=10cm, height=6cm]{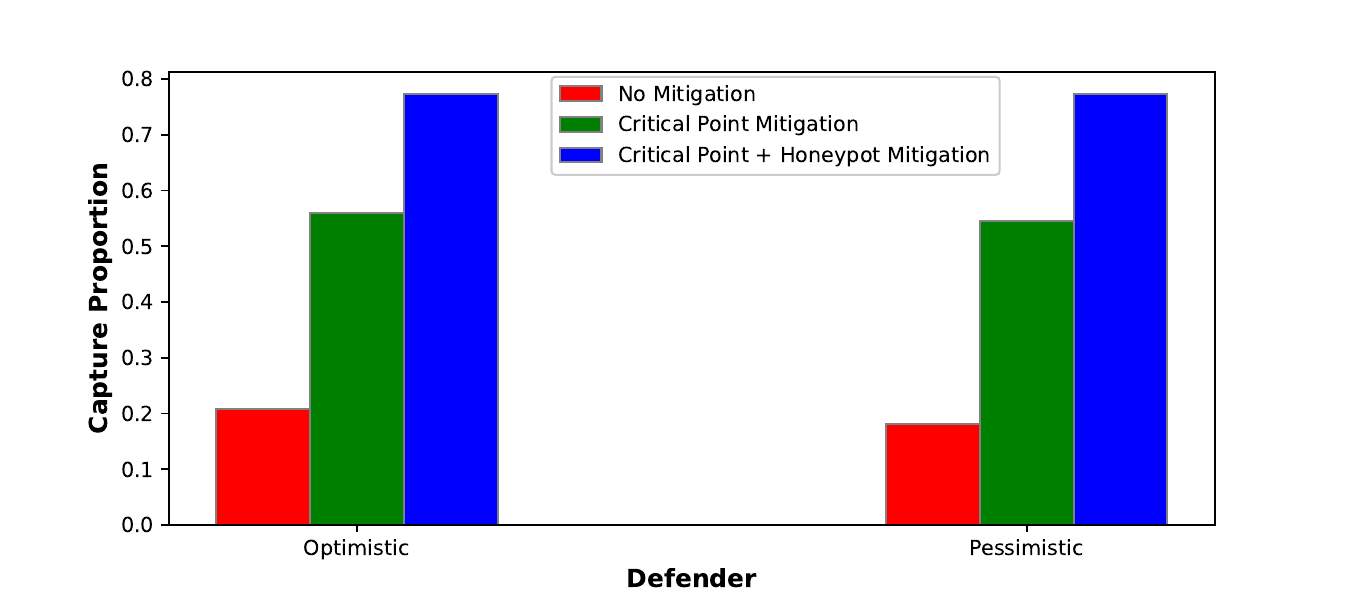}
    \caption{Capture proportion on critical point based defender mitigation strategies}
    \label{fig:cpoint}
\end{figure}

Fig. ~\ref{fig:cpoint} demonstrates capture proportion on different defender mitigation strategies. Critical point mitigation and critical point mitigation with added honeypot outperform no mitigation. It is worth noting that adding one honeypot with critical point mitigation is useful as it increases the proportion of capturing the attacker.



\section{Conclusion}\label{sec:conc}
In this paper, we proposed a security resource allocation problem for cyber deception against reconnaissance attacks. We proposed a novel framework to assess the effectiveness of cyber deception against zero-day attacks using an attack graph. We formulated this problem as a two-player game played on an attack graph with asymmetric information assuming that part of the attack graph is unknown to the defender. We identified the critical locations that may impact the defender payoff the most if specific nodes suffer a zero-day vulnerability. The proposed analysis is limited to considering a single vulnerability at a time, and focusing on the node location. Our future work will consider a set of vulnerabilities at a time which will follow the proposed analysis while significantly increasing the action space of the game model.  




%
%
%

%
%
%

\bibliographystyle{unsrt}
\bibliography{reference}
\end{document}